\newtheorem{theorem}{Theorem}[section]
\newtheorem{definition}[theorem]{Definition}
\newtheorem{claim}[theorem]{Claim}
\newtheorem{lemma}[theorem]{Lemma}
\newtheorem{remark}[theorem]{Remark}
\newtheorem{notation}[theorem]{Notation}
\newcommand{\sq}{\hbox{\rlap{$\sqcap$}$\sqcup$}}
\newcommand{\qed}{\hspace*{\fill}\sq}
\newenvironment{proof}{\noindent {\bf Proof.}\ }{\qed\par\vskip 4mm\par}
\newcommand{\ignore}[1]{ }
\title{Equivalence of Lower Bounds on the Number of
Perfect Pairs }
\author{ V.~Ch.~Venkaiah \footnote{Corresponding author}\\
School of Computer and Information Sciences \\
University of Hyderabad \\
Professor C R Rao Road, Gachibowli \\
Hyderabad - 500 046.  \\
India.  \\ \\
K. Ramanjaneyulu \\
Department of Mathematics \\
SCR College of Engineering \\
Chilakaluripet \\ \\
Neelima Jampala \\
C. R. Rao Advanced Institute of Mathematics, Statistics, and
Computer Science \\
University of Hyderabad Campus \\
Gachibowli, Hyderabad - 500 046. \\ \\
J.~Rajendra Prasad \\
Department of Information Technology \\
P.~V.~P.~Siddartha Institute of Technology \\
Vijayawada - 520 007.\\ \\
Email: venkaiah@hotmail.com or vvcs@uohyd.ernet.in \\
kakkeraram@yahoo.co.in,neelima.jampala@gmail.com \\
rp.rajendra@rediffmail.com
\\ }
\date{\today}
\begin{document}
\maketitle

\begin{abstract}
Let $c(\mathcal{F})$ be the number of perfect pairs of
$\mathcal{F}$ and $c(G)$ be the maximum of $c(\mathcal{F})$ over
all (near-) one-factorizations $\mathcal{F}$ of $G$. Wagner showed
that for odd $n$, $c(K_{n}) \geq \frac{n*\phi(n)}{2}$ and for $m$
and $n$ which are odd and co-prime to each other, $c(K_{mn}) \geq
2*c(K_{m})*c(K_{n})$. In this note, we establish that both these
results are equivalent in the sense that they both give rise to
the same lower bound.
\end{abstract}

{\small \textbf{Keywords:} complete graph, equivalence of lower
bounds, near-one-factorization, near-one-factor of a product
graph, number of perfect pairs}

\section{Introduction}
\label{sec:intro} A one-factor of a graph $G$ of even order is a
set of edges that cover each vertex exactly once. In other words,
it is a regular spanning sub-graph of degree one \cite{DW01,
MM09}. A one-factorization of $G$ is a partition of the edge set
into a set of one-factors \cite{PKVSW08, DW01, KO05}. Analogously,
a near-one-factor of a graph $G = (V, E)$ of odd order is a
one-factor of $G \setminus {v}$ for some $v \in V$, and a
near-one-factorization of $G$ is a partition of $E$ into
near-one-factors. Our focus in this note is on near-one-factors
and near-one-factorizations.
\\

A one-factorization $\mathcal{F}$ of a complete graph $K_{2n}$ on
$2n$ vertices consists of $2n-1$ one-factors $F_{1},$ $F_{2},$ $
\cdots,$ $F_{2n-1}$. A near-one-factorization $\mathcal{F}$ of
$K_{2n-1}$ also consists of $2n-1$ near-one-factors $F_{1}, F_{2},
\cdots, F_{2n-1}$.
\\

A pair of one-factors $F_{k}$ and $F_{j}$ in a one-factorization
is said to be perfect if $F_{k} U F_{j}$ induces a Hamiltonian
cycle in $G$ \cite{BB06}. A pair of near-one-factors is called
perfect if their union is a Hamiltonian path of $G$. If every pair
of (near-) one-factors of a (near-) one-factorization is perfect
then the (near-) one-factorization is called perfect.
\\

Define $c(\mathcal{F})$ to be the number of perfect pairs of
$\mathcal{F}$ and $c(G)$ to be the maximum of $c(\mathcal{F})$
over all (near-) one-factorizations $\mathcal{F}$ of $G$
\cite{W92, WD92, W97}. Perfect one-factorization conjecture says
that for $m \geq 2$, $c(K_{2m}) = \binom{2m-1}{2}$ \cite{A77,
BMW06, DG96, K64, MR03, RR05}. This conjecture is still open
except for the case when $m$ is prime or $2m-1$ is prime or $2m
\in \{ 16,
28, 36, 40, 50, 52, 126, 170, 244, 344, 730, 1332, 1370, 1850, 2198,\\
3126, 6860, 12168, 16808, 29792 \}$ \cite{ASZ01, A73, DD06, DS89,
ISS87, KANN89, AK64, DAP95, ES91, SSDM, SSDM89, W92, WD92, W97,
Wo08, XB99}. It can be readily argued that a complete graph of
order $2m$ has a perfect one-factorization if and only if a
complete graph of order $2m-1$ has a perfect
near-one-factorization.
\\

As part of an attempt to prove the perfect one-factorization
conjecture, Wagner, in \cite{W92}, shows that for odd $n$,
$c(K_{n}) \geq n*\phi(n) / 2$, where $\phi(n)$ is the Euler's
totient function. Also proven in the same paper is that $c(K_{mn})
\geq 2*c(K_{m})*c(K_{n})$ if $m$ and $n$ are odd and are
relatively prime. Though the later result can be used with other
relevant information to arrive at a better lower bound but it is
equivalent to the former result.
\\

In this note, we show that the two results mentioned above are
equivalent in the sense that they both give rise to the same lower
bound. This equivalence is established by coming up with a
one-to-one correspondence between both the sets of
near-one-factors.

\begin{notation} $\frac{k}{r} \bmod n$ denotes $(k*\mbox{ multiplicative
inverse of }r \mbox{ with respect to }n) \bmod n$, if the
multiplicative inverse of $r$ with respect to $n$ exists.
\end{notation}

\subsection{Our Results} Main contribution of the paper is that the
two results, namely Proposition 2 and Theorem 3 of \cite{W92}, are
equivalent in the sense that they both give rise to the same lower
bound. It also extends the definition of a near-one-factor given
in Proposition 2 of \cite{W92} to one-factors and comes up with an
alternative treatment for the proposition. In addition, it renders
an algebraic description to the construction of a near-one-factor
of a product graph from those of its constituent graphs and
supplies Theorem 3 of \cite{W92} with an algebraic proof.

\subsection{Organization of the Paper} Section 2 examines the
definition of a near-one-factor given in the proof of Proposition
2 of \cite{W92}, extends it to one-factors with suitable
modifications, and provides an alternative treatment to
Proposition 2. Section 3 proposes an algebraic description to the
construction of a near-one-factor of a product graph from those of
its constituent graphs and supplies Theorem 3 of \cite{W92} with
an algebraic proof. Section 4 shows that both Proposition 2 and
Theorem 3 of \cite{W92} are equivalent in the sense that they both
give rise to the same lower bound. This is achieved by
establishing a one-to-one correspondence between the set of
near-one-factors of the product graph and the set of products of
near-one-factors of the constituent graphs. Concluding remarks are
in Section 5.

\section{One-Factors and Perfect Pairs}

This section examines the definition of a one-factor given in
\cite{W92} and explores the conditions under which two of them
form a perfect pair.
\\

Consider a graph $F_{k}$, $k \in \{0, 1, 2, \cdots, n-1\}$ on $n$
vertices with adjacency matrix $A_{k}$ that has 1 as its $i,
j^{th}$ element, where $i \neq j$ and $i + j = k \bmod n$.

\begin{claim}Let $n$ be odd. Then $F_{k}$, $k \in \{0, 1, 2, \cdots, n-1\}$ defined as above is a
near-one-factor of $K_{n}$, whose isolated vertex is $\frac{k}{2}
\bmod n$.
\end{claim}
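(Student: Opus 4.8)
The plan is to identify the vertex set of $F_{k}$ with $\mathbb{Z}_{n} = \{0,1,\dots,n-1\}$ and to read the adjacency condition ``$i+j \equiv k \pmod n$ with $i \neq j$'' as assigning to each vertex $i$ the single candidate partner $\sigma(i) := (k-i) \bmod n$. First I would observe that $\sigma$ is an involution of $\mathbb{Z}_{n}$, since $\sigma(\sigma(i)) = k-(k-i) = i$. Its fixed points are the solutions of $2i \equiv k \pmod n$; because $n$ is odd, $2$ is a unit modulo $n$, so this congruence has exactly one solution, namely $i_{0} := \frac{k}{2} \bmod n$ in the notation fixed above. Thus $i_{0}$ is the unique vertex for which the "edge" $\{i, \sigma(i)\}$ would be a loop and hence is discarded.

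Next I would check that $\sigma$ restricts to a fixed-point-free involution of $\mathbb{Z}_{n} \setminus \{i_{0}\}$. It maps this set into itself, for $\sigma(i) = i_{0}$ would give $i = k - i_{0} = i_{0}$ (using $2 i_{0} \equiv k$), which is excluded; and it has no fixed point on $\mathbb{Z}_{n} \setminus \{i_{0}\}$ since $i_{0}$ is the only fixed point of $\sigma$ on all of $\mathbb{Z}_{n}$. Consequently $\sigma$ partitions the $n-1$ vertices of $\mathbb{Z}_{n} \setminus \{i_{0}\}$ into $\frac{n-1}{2}$ pairs $\{i, \sigma(i)\}$, each of which — since $i \neq \sigma(i)$ — is a genuine edge of $K_{n}$, and by construction these are exactly the edges of $F_{k}$.

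Finally I would assemble the pieces: the edge set of $F_{k}$ equals $\{\{i,\sigma(i)\} : i \in \mathbb{Z}_{n}\setminus\{i_{0}\}\}$, which is a perfect matching on $\mathbb{Z}_{n}\setminus\{i_{0}\}$; it covers every vertex except $i_{0}$ exactly once and leaves $i_{0}$ isolated. Hence $F_{k}$ is a one-factor of $K_{n}\setminus\{i_{0}\}$, i.e.\ a near-one-factor of $K_{n}$ whose isolated vertex is $\frac{k}{2}\bmod n$, as claimed. I do not expect a real obstacle here: the whole content is the elementary fact that $x \mapsto k-x$ becomes a fixed-point-free involution once its single fixed point is removed, with the oddness of $n$ supplying precisely the two facts needed — that this fixed point exists and is unique, and that $n-1$ is even so the remaining vertices pair up. The only step demanding a line of care is verifying that $\sigma$ maps $\mathbb{Z}_{n}\setminus\{i_{0}\}$ into itself, handled above.
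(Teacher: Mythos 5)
Your proof is correct and follows essentially the same route as the paper's: both pair each vertex $i$ with the unique partner $k-i \bmod n$ and use the oddness of $n$ (invertibility of $2$) to show that $\frac{k}{2}\bmod n$ is the unique vertex left unmatched. Your phrasing in terms of the involution $\sigma(i)=k-i$ and its single fixed point is a slightly tidier packaging of the same argument, but there is no substantive difference.
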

\begin{proof}Fix $k$. Then for every $i \in \{0, 1, 2, \cdots, n-1\}$
there is a $j \in \{0, 1, 2, \cdots, n-1\}$ such that $i + j = k
\bmod n$. This is because the set $\{0, 1, 2, \cdots, n-1\}$ is
closed and each element has additive inverse in the set with
modulo $n$ addition as the operation. Also $i \neq j$ except for
$i = j = \frac{k}{2} \bmod n$; for if $i = j$, then $2i = k \bmod
n$, which implies $i = \frac{k}{2} \bmod n$. $\frac{k}{2} \bmod n$
is unique because $n$ is odd and hence multiplicative inverse of
$2$ exists. Moreover, the pair $i$ and $j$ is unique in the sense
that for a given $k$ and $i$ there is a unique $j$ with this
property. Claim follows because the first part of the discussion
implies that every vertex of $K_{n}$ occurs in $F_{k}$ and the
last statement implies that each vertex occurs exactly once.
\end{proof}
\begin{remark}When $n$ is even, only the graphs $F_{k}$, $k \in
\{1, 3, \cdots, n-1\}$ are the one-factors of $K_{n}$. For the
graph $F_{k}$, $k$ even, to be a one-factor of $K_{n}$, the
adjacency
matrix $A_{k}$ should be such that has $1$ in its $i, j^{th}$ element, where $i \neq j$ and either \\
\begin{enumerate}
\item $i, j \in \{\frac{k}{2}, \frac{n+k}{2}\}$ or \\
\item $i, j \not\in \{\frac{k}{2}, \frac{n+k}{2} \}$, and $i + j
\equiv k \bmod n$
\end{enumerate}
\end{remark}

Following lemma aids in arriving at other results.

\begin{lemma} Let $n$ be odd. Also, let $F_{k}$ and $F_{\ell}$ be
two near-one-factors of $K_{n}$ defined as above. Then the
$i^{th}$ edge of the union of these two near-one-factors, starting
from the isolated vertex of the near-one-factor $F_{k}$, is either
\begin{eqnarray*}
&&\hspace{-6in}((\frac{ik}{2} - \frac{(i-1)\ell}{2}) \bmod n,
(\frac{(i+1)\ell}{2} -
\frac{ik}{2}) \bmod n), \mbox{   if  } i \mbox{  is odd }\\ \mbox{  or } \hspace{6.75in}&&\\
&&\hspace{-6in}((\frac{i\ell}{2} - \frac{(i-1)k}{2}) \bmod n,
(\frac{(i+1)k}{2} - \frac{i\ell}{2}) \bmod n), \mbox{   if } i
\mbox{  is even }.
\end{eqnarray*}
\end{lemma}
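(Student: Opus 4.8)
The plan is to trace the unique alternating path formed by $F_k \cup F_\ell$ starting at the isolated vertex of $F_k$, and to prove the edge formula by induction on $i$. The starting vertex is $\frac{k}{2} \bmod n$ by the Claim. From a vertex $v$, the edge of $F_k$ incident to $v$ leads to the vertex $k - v \bmod n$ (since $v + j \equiv k$), and the edge of $F_\ell$ incident to $v$ leads to $\ell - v \bmod n$; the path alternates between using an $F_k$-edge and an $F_\ell$-edge. First I would fix the convention implicit in the statement: the first edge ($i=1$, odd) is an $F_k$-edge, the second ($i=2$, even) is an $F_\ell$-edge, and so on, so that odd-indexed edges come from $F_k$ and even-indexed edges from $F_\ell$. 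One subtlety to note at the outset is that $\frac{k}{2} \bmod n$ is the isolated vertex of $F_k$, so there is no $F_k$-edge there; the path genuinely begins with the $F_\ell$-edge out of $\frac{k}{2}$. Comparing with the claimed formula for $i=1$, namely $\bigl(\frac{k}{2} \bmod n,\ (\ell - \frac{k}{2}) \bmod n\bigr)$, confirms that the first edge is indeed an $F_\ell$-edge; so in fact odd $i$ corresponds to $F_\ell$-edges here. I would state this matching explicitly and then carry the induction with that fixed parity convention.

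**The inductive step.** Let $v_{i}$ denote the $i$-th vertex on the path, with $v_1 = \frac{k}{2} \bmod n$, so the $i$-th edge is $(v_i, v_{i+1})$. The claimed formulas assert
\[
v_i \;=\; \Bigl(\tfrac{(i-1)\ell}{2} - \tfrac{(i-2)k}{2}\Bigr) \bmod n \quad\text{for } i \text{ odd},\qquad
v_i \;=\; \Bigl(\tfrac{(i-1)k}{2} - \tfrac{(i-2)\ell}{2}\Bigr) \bmod n \quad\text{for } i \text{ even},
\]
which one reads off by matching endpoints of consecutive claimed edges. I would verify the base case $v_1 = \frac{k}{2} = \frac{0\cdot k}{2} + \frac{0\cdot\ell}{2} \cdots$ — actually $v_1 = \frac{k}{2}$, matching the odd formula with $i=1$ (which gives $\frac{0\cdot\ell}{2} - \frac{(-1)k}{2} = \frac{k}{2}$). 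For the inductive step, suppose $v_i$ is given by the appropriate formula and $i$ is odd, so the next edge is an $F_k$-edge (even index $i+1$... wait — under our convention odd edges are $F_\ell$-edges). Concretely: edge $i$ leaves $v_i$; if $i$ is odd it is an $F_\ell$-edge, so $v_{i+1} = \ell - v_i \bmod n$; if $i$ is even it is an $F_k$-edge, so $v_{i+1} = k - v_i \bmod n$. Substituting the inductive formula for $v_i$ and simplifying (all arithmetic mod $n$, using that $2$ is invertible) should yield exactly the formula for $v_{i+1}$ with the flipped parity; this is a short algebraic manipulation of the half-integer coefficients. Once the vertex formulas are established, the edge formula is immediate since the $i$-th edge is $(v_i, v_{i+1})$ and one checks the two displayed expressions agree with $(v_i, v_{i+1})$ in each parity case.

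**The main obstacle.** The genuine content — beyond the bookkeeping — is verifying that the walk I have described never repeats a vertex before exhausting all $n$ vertices, i.e. that $F_k \cup F_\ell$ really is a single Hamiltonian-path-like alternating trail rather than breaking into a short cycle plus leftover pieces; otherwise "the $i$-th edge" is not well defined for all relevant $i$. In fact the lemma as stated only asserts a formula for the $i$-th edge \emph{when such an edge exists}, so strictly I only need: the described alternating process, started at $\frac{k}{2} \bmod n$, is well defined for as long as it proceeds, and at each step the vertex is given by the stated formula. Thus the potential obstacle — proving the trail is genuinely Hamiltonian — is not actually needed for this lemma and I would defer any such claim; the only thing to be careful about is that at vertex $\frac{k}{2} \bmod n$ there is no outgoing $F_k$-edge (it is isolated in $F_k$), which is precisely why the path starts with an $F_\ell$-edge, and symmetrically that the path will terminate upon reaching the isolated vertex of $F_\ell$, namely $\frac{\ell}{2} \bmod n$, when it is time to take an $F_\ell$-edge. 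So the proof reduces to: (i) fix the parity convention and identify the start vertex; (ii) set up the vertex formulas; (iii) induct, with the one-line algebraic simplification in each parity case; (iv) read off the edge formula. The only place requiring care is keeping the two families of half-integer coefficients straight across the parity flip, which is routine but error-prone.
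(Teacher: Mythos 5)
Your overall strategy is exactly the paper's: trace the alternating trail from the isolated vertex $\frac{k}{2}\bmod n$ of $F_{k}$, observe that the first edge must be an $F_{\ell}$-edge (hence odd-indexed edges lie in $F_{\ell}$ and even-indexed ones in $F_{k}$), and propagate via $v_{i+1}=(\ell-v_{i})\bmod n$ or $v_{i+1}=(k-v_{i})\bmod n$ according to parity. The paper merely computes the first four edges and asserts the general pattern, so your proposal to close this with an explicit induction is a mild strengthening, and your decision to defer the ``no premature cycle'' issue is appropriate (that is handled separately, in Lemma 2.4).

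However, the vertex formulas you wrote down are off by one, and with them your inductive step does not close. Reading off the first coordinate of the $i^{th}$ displayed edge, the correct statements are $v_{i}=(\frac{ik}{2}-\frac{(i-1)\ell}{2})\bmod n$ for odd $i$ and $v_{i}=(\frac{i\ell}{2}-\frac{(i-1)k}{2})\bmod n$ for even $i$. What you wrote, $v_{i}=(\frac{(i-1)\ell}{2}-\frac{(i-2)k}{2})\bmod n$ for odd $i$ (and its even analogue), is the first coordinate of edge $i-1$, i.e.\ it is $v_{i-1}$; the two expressions coincide only at $i=1$ (which is why your base-case check did not catch the slip), and at $i=3$ your formula returns $\ell-\frac{k}{2}=v_{2}$ rather than $\frac{3k}{2}-\ell$. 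Concretely, applying your recursion $v_{i+1}=\ell-v_{i}$ to your odd formula yields $\frac{(3-i)\ell}{2}+\frac{(i-2)k}{2}$, which agrees with your even formula at $i+1$ only when $k=\ell$, so the ``short algebraic manipulation'' you promise would not go through as written. With the corrected formulas the induction is a one-line verification in each parity case: for odd $i$, $\ell-(\frac{ik}{2}-\frac{(i-1)\ell}{2})=\frac{(i+1)\ell}{2}-\frac{ik}{2}$, matching the even formula at $i+1$, and symmetrically $k-(\frac{i\ell}{2}-\frac{(i-1)k}{2})=\frac{(i+1)k}{2}-\frac{i\ell}{2}$ for even $i$. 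So the plan is sound and matches the paper's argument, but this bookkeeping error must be repaired before the proof is valid.
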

\begin{proof} The other vertex of the edge of $F_{\ell}$ connecting the isolated
vertex of $F_{k}$, i.e.$\frac{k}{2} \bmod n$, is $(\ell -
\frac{k}{2}) \bmod n$. So, the first edge of the union starting
from the isolated vertex of $F_{k}$ is $(\frac{k}{2} \bmod n,
(\ell - \frac{k}{2})\bmod n)$. Similarly, the other vertex of the
edge of $F_{k}$ connecting the vertex $(\ell - \frac{k}{2}) \bmod
n$ is $(\frac{3k}{2} - \ell) \bmod n$. So, the second edge of the
union starting from the isolated vertex of $F_{k}$ is $((\ell -
\frac{k}{2})\bmod n, (\frac{3k}{2} - \ell) \bmod n)$. Continuing
in this way we have the third edge as $((\frac{3k}{2} - \ell)
\bmod n, (2\ell - \frac{3k}{2}) \bmod n)$, fourth edge as $((2\ell
- \frac{3k}{2})\bmod n, (\frac{5k}{2} - 2\ell) \bmod n)$, etc. In
general, the $i^{th}$ edge of the union is either
\begin{eqnarray*}
&&\hspace{-6in}((\frac{ik}{2} - \frac{(i-1)\ell}{2})\bmod n,
(\frac{(i+1)\ell}{2} - \frac{ik}{2}) \bmod n) \mbox{  for odd  } i \\ \mbox{ or } \hspace{6.75in}&&\\
&&\hspace{-6in}((\frac{i\ell}{2} - \frac{(i-1)k}{2})\bmod n,
(\frac{(i+1)k}{2} - \frac{i\ell}{2}) \bmod n) \mbox{  for even }
i.
\end{eqnarray*}
\end{proof}
\begin{lemma} Let $n$ be odd. Also, let $F_{k}$ and $F_{\ell}$ be two
near-one-factors of $K_{n}$ defined as above. Then the path
starting from the isolated vertex of either of the
near-one-factors in the union of these two does not contain a
cycle if and only if $(k - \ell)$ is relatively prime to $n$.
\end{lemma}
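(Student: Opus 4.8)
The plan is to trace the path that the preceding lemma describes, starting from the isolated vertex of $F_{k}$, and to show it passes through exactly $n/\gcd(k-\ell,n)$ vertices; the lemma is then immediate. Throughout assume $k\neq\ell$ (otherwise $F_{k}=F_{\ell}$ and the union is a matching, not a path). In $F_{k}\cup F_{\ell}$ every vertex has degree $2$ except the isolated vertex $\frac{k}{2}\bmod n$ of $F_{k}$ and the isolated vertex $\frac{\ell}{2}\bmod n$ of $F_{\ell}$, each of which has degree $1$; hence the union is a single path with these two endpoints together with zero or more (necessarily even) cycles, and it ``contains no cycle'' exactly when that path is Hamiltonian, i.e.\ meets all $n$ vertices.

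First I would simplify the vertex list coming from the preceding lemma. With $d=k-\ell$, reading the two endpoints of the $i^{th}$ edge and simplifying gives, for the path started at $v_{0}=\frac{k}{2}\bmod n$, the closed forms $v_{2m}\equiv\frac{k}{2}+md\pmod n$ and $v_{2m+1}\equiv\frac{k}{2}-(m+1)d\pmod n$; thus the even-indexed vertices run through an arithmetic progression of common difference $d$ and the odd-indexed ones through one of common difference $-d$. This uses only that $2$ is invertible modulo $n$, since $n$ is odd.

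Next I would locate where the traversal stops. It must end at the other degree-$1$ vertex $\frac{\ell}{2}\bmod n$, and solving $v_{j}\equiv\frac{\ell}{2}\pmod n$ shows that both the even and the odd case reduce to $(2m+1)d\equiv 0\pmod n$. Writing $g=\gcd(d,n)$ and $n'=n/g$ (an odd divisor of $n$), the least $m\geq 0$ with $(2m+1)d\equiv 0\pmod n$ is $m_{0}=(n'-1)/2$, since then $n'$ must divide the odd number $2m+1$ and the smallest such positive value is $n'$ itself; so the path is $v_{0},v_{1},\dots,v_{n'-1}$. I would also check that no $v_{j}$ with $0<j<n'-1$ equals $\frac{k}{2}$ or $\frac{\ell}{2}$ (both first recur only at index $\geq 2n'-1>n'-1$), so the walk does not stop early. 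Finally I would show these $n'$ vertices are pairwise distinct and exhaust the coset $\frac{k}{2}+\langle d\rangle$, where $\langle d\rangle$ is the additive subgroup of $\mathbb{Z}_{n}$ generated by $d$: the even indices contribute $\frac{k}{2}+md$ for $0\leq m\leq\frac{n'-1}{2}$, and, after rewriting $-jd\equiv(n'-j)d\pmod n$, the odd indices contribute $\frac{k}{2}+md$ for $\frac{n'+1}{2}\leq m\leq n'-1$, so together each of the $n'$ distinct multiples $\frac{k}{2}+md$ ($0\leq m<n'$) occurs exactly once. Hence the path through $\frac{k}{2}\bmod n$ has precisely $n'=n/\gcd(k-\ell,n)$ vertices, so it is Hamiltonian — equivalently $F_{k}\cup F_{\ell}$ has no cycle — if and only if $\gcd(k-\ell,n)=1$. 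Starting instead from the isolated vertex $\frac{\ell}{2}\bmod n$ of $F_{\ell}$ is the same computation with $k$ and $\ell$ interchanged, and $\gcd(\ell-k,n)=\gcd(k-\ell,n)$, which gives the statement for ``either'' near-one-factor.

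I expect the only real work to be the last step: the even/odd index bookkeeping together with the reduction $-jd\equiv(n'-j)d\pmod n$ needed to see that the two index ranges tile the whole orbit of $d$ without overlap, and the verification that the walk terminates exactly at $\frac{\ell}{2}\bmod n$ rather than earlier. The remainder is routine arithmetic in $\mathbb{Z}_{n}$, resting only on $n$ being odd (so $2^{-1}$ exists) and on the additive order of $d$ modulo $n$ being $n/\gcd(d,n)$.
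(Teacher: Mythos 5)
Your argument is correct, and at its core it follows the same route as the paper: both start from the explicit vertex formulas of Lemma 2.3 and observe that everything reduces to arithmetic of the difference $d=k-\ell$ in $\mathbb{Z}_{n}$, using only the invertibility of $2$. The difference is one of completeness rather than of method. The paper writes the condition for a repeated vertex as $(i-j)(k-\ell)\equiv 0 \bmod n$ and then ``concludes'' $i=j$ by invoking coprimality inside a chain of $\Leftrightarrow$'s --- which cleanly gives only the direction ``$\gcd(k-\ell,n)=1$ implies no repetition''; the converse (that $\gcd(k-\ell,n)>1$ actually forces a cycle in the union) is left implicit. Your version closes that gap by computing the exact size of the component containing the two degree-one vertices, namely $n/\gcd(k-\ell,n)$, via the closed forms $v_{2m}\equiv\frac{k}{2}+md$ and $v_{2m+1}\equiv\frac{k}{2}-(m+1)d$ and the tiling of the coset $\frac{k}{2}+\langle d\rangle$ by the two index ranges; this yields both implications at once and, as a bonus, identifies the structure of the union (one path of $n/\gcd(d,n)$ vertices plus even cycles) when the pair is not perfect. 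You are also right to flag, and correctly handle, the degenerate case $k=\ell$ and the literal wording of the statement (a path cannot ``contain a cycle''; the intended assertion is that the union has no cycle, equivalently that the path is Hamiltonian), neither of which the paper addresses.
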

\begin{proof}
With out loss of generality, let us assume that the starting
vertex of the path is the isolated vertex of $F_{k}$. Then from
the previous lemma (Lemma 2.3) the $i^{th}$ edge of the path is
either
\begin{eqnarray*}
&&\hspace{-6in}((\frac{ik}{2} - \frac{(i-1)\ell}{2})\bmod n,
(\frac{(i+1)\ell}{2} - \frac{ik}{2}) \bmod n) \mbox{  if  } i \mbox{  is odd }\\ \mbox{ or } \hspace{6.75in}&&\\
&&\hspace{-6in}((\frac{i\ell}{2} - \frac{(i-1)k}{2})\bmod n,
(\frac{(i+1)k}{2} - \frac{i\ell}{2}) \bmod n) \mbox{  if } i
\mbox{  is even }.
\end{eqnarray*}
For there to be a cycle on this path, there must exist two
distinct positive integers $i$ and $j$ such that either
\begin{eqnarray*}
&&\hspace{-6in}(\frac{ik}{2} - \frac{(i-1)\ell}{2}) \bmod n =
(\frac{jk}{2} -
\frac{(j-1)\ell}{2}) \bmod n \\
\mbox{ or } \hspace{6.75in} &&\\
&&\hspace{-6in}(\frac{i\ell}{2} - \frac{(i-1)k}{2}) \bmod n =
(\frac{j\ell}{2} - \frac{(j-1)k}{2}) \bmod n
\end{eqnarray*}
But they are equivalent to
\begin{eqnarray*}
&& (\frac{(i-j)}{2})(k - \ell) \equiv 0 \bmod n \\
\Leftrightarrow && (i - j)(k - \ell) \equiv 0 \bmod n
\hspace{0.2cm}(\because \mbox{multiplicative inverse of }
2\mbox{ with respect to }n\mbox{ exists})\\
\Leftrightarrow && i = j \hspace{0.2cm}(\because (k - \ell) \mbox{
is relatively prime to } n \mbox{ and both } i \mbox{ and } j
\mbox{ are less than } n)
\end{eqnarray*}
\end{proof}
\begin{lemma}Let $n$ be odd. Then two near-one-factors $F_{k}$ and
$F_{\ell}$ defined as above is a perfect pair if and only if $k -
\ell$ is relatively prime to $n$.
\end{lemma}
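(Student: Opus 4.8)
The plan is to combine the two preceding lemmas with the definition of a perfect pair of near-one-factors. Recall that $F_k$ and $F_\ell$ form a perfect pair precisely when their union $F_k \cup F_\ell$ is a Hamiltonian path of $K_n$, i.e. a single path through all $n$ vertices. Since each of $F_k$, $F_\ell$ is a near-one-factor (by the Claim), every vertex of $K_n$ has degree $2$ in the union except the two isolated vertices $\frac{k}{2} \bmod n$ and $\frac{\ell}{2} \bmod n$, which have degree $1$ (these two vertices are distinct exactly when $k \neq \ell$, which is forced once we ask $k-\ell$ to be coprime to $n$, as $0$ is not coprime to $n$ for $n>1$). Hence $F_k \cup F_\ell$ is always a disjoint union of paths and cycles, with exactly two vertices of degree $1$; it is a Hamiltonian path if and only if it contains no cycle, because a cycle component together with the path component(s) would leave fewer than $n$ vertices on the path and the union would be disconnected.

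With that structural observation in hand, the proof is essentially a citation of Lemma 2.4. First I would argue the ``if'' direction: assume $k - \ell$ is relatively prime to $n$. By Lemma 2.4, the path traced starting from the isolated vertex of $F_k$ (equivalently $F_\ell$) contains no cycle, so the union consists of a single path; since it visits, by the edge formula of Lemma 2.3, all of the required vertices and has exactly the two degree-one vertices as endpoints, it is a Hamiltonian path, so $F_k$ and $F_\ell$ are a perfect pair. For the ``only if'' direction, I would argue the contrapositive: if $\gcd(k-\ell, n) = d > 1$, then by Lemma 2.4 the path started at the isolated vertex of $F_k$ does contain a cycle, hence the union $F_k \cup F_\ell$ has a cycle component and therefore cannot be a single Hamiltonian path; so $F_k$ and $F_\ell$ are not a perfect pair.

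I do not expect any serious obstacle here — this lemma is a corollary of Lemma 2.4 once one notes that ``no cycle on the path from an isolated vertex'' is equivalent to ``union is a Hamiltonian path''. The one point that warrants a sentence of care is ruling out the degenerate case: when $k - \ell$ is coprime to $n$ (and $n > 1$) we automatically have $k \neq \ell$, so the two isolated vertices $\frac{k}{2} \bmod n$ and $\frac{\ell}{2} \bmod n$ are genuinely distinct and the union really is a path on $n$ vertices rather than a disjoint union of cycles. A second minor check is that ``the path contains no cycle'' together with the degree sequence (two vertices of degree $1$, the rest degree $2$) forces connectedness, i.e. that there is no leftover cycle component hiding among vertices not reached from the starting isolated vertex; this follows because any vertex not on the path would have to lie on a cycle, and Lemma 2.4 already forbids cycles when $k-\ell$ is coprime to $n$. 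Assembling these remarks with the two directions above completes the proof.
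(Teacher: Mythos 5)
Your reduction of the lemma to a structural statement (two vertices of degree one, all others of degree two, hence the union is one path plus possibly some cycles, and it is a Hamiltonian path iff there is no cycle component) is sound, and it is a genuinely different route from the paper's: the paper instead computes, using the explicit edge formula, that the walk from $\frac{k}{2} \bmod n$ reaches the other isolated vertex $\frac{\ell}{2} \bmod n$ exactly when $(i+1)(\ell-k)\equiv 0 \bmod n$, hence at step $i=n-1$ precisely when $\gcd(k-\ell,n)=1$, so the traced path has $n$ vertices and nothing is left over. Your version would be cleaner if it worked, but as written it has a gap at the one place you flagged and then waved away.

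The gap is this: Lemma 2.4 speaks only about the walk that starts at the isolated vertex of $F_{k}$. It does not exclude the configuration in which that walk terminates early at $\frac{\ell}{2} \bmod n$ (a short path joining the two degree-one vertices) while the remaining vertices sit on cycle components disjoint from that walk. Your sentence ``any vertex not on the path would have to lie on a cycle, and Lemma 2.4 already forbids cycles'' is exactly where the argument begs the question, because Lemma 2.4 forbids only repetitions along that particular walk, not cycles elsewhere in $F_k\cup F_\ell$. To close the gap you must do one of two things: either (a) redo the algebra of Lemma 2.4 starting from an \emph{arbitrary} vertex $v_0$ of a putative cycle component --- such a component alternates $F_k$- and $F_\ell$-edges, so after $2m$ steps one is at $v_0+m(\ell-k) \bmod n$, and closing up forces $m(\ell-k)\equiv 0 \bmod n$, impossible for $0<m<n$ when $\gcd(k-\ell,n)=1$; or (b) do what the paper does and compute the step at which the traced walk hits $\frac{\ell}{2} \bmod n$, which simultaneously gives both directions of the equivalence (under coprimality it happens only at $i=n-1$, and when $\gcd(k-\ell,n)=d>1$ it happens already at $i=\frac{n}{d}-1$, so the path misses vertices). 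Either repair is short, but one of them is needed; the lemma is not a formal corollary of Lemma 2.4 as that lemma is stated.
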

\begin{proof} Consider the path starting from the isolated vertex of $F_{k}$
in the union of the two near-one-factors. As there is no cycle on
this path, it follows that the other end of the path must be the
isolated vertex $\frac{\ell}{2} \bmod n$ of $F_{\ell}$. So,
\begin{eqnarray*}
&&\frac{\ell}{2} \bmod n = (\frac{(i+1)k}{2} - \frac{i\ell}{2})
\bmod n \\
\Leftrightarrow && \frac{\ell}{2} + \frac{i\ell}{2} -
\frac{(i+1)k}{2}
\equiv 0 \bmod n \\
\Leftrightarrow && \frac{(i+1)}{2}(\ell - k) \equiv 0 \bmod n \\
\Leftrightarrow && (i+1)(\ell - k) \equiv 0 \bmod n
\hspace{0.2cm}(\because 2 \mbox{
has multiplicative inverse}) \\
\Leftrightarrow && (i+1) \mbox{  is a multiple of } n
\hspace{0.2cm}(\because k - \ell
\mbox{  is relatively prime to  } n) \\
\Leftrightarrow && i + 1 = n \hspace{0.2cm}(\because \mbox{ length
of any path in the
union of two near-one-factors is less than } n) \\
\Leftrightarrow && i = n - 1.
\end{eqnarray*}
That is the path connecting the isolated vertices is a hamiltonian
path. Hence the claim.
\end{proof}
\begin{remark}From the discussion above it follows that a near-one-factor
$F_{k}$ forms a perfect pair with another near-one-factor
$F_{\ell}$ if and only if $k - \ell$ is relatively prime to $n$.
For a fixed $k$, the number of such $\ell$'s is equal to
$\phi(n)$.(This may be proved by observing that a pair of integers
$\ell$ and $c*n - \ell$, $c$ is some integer, is either both
relatively prime to $n$ or both not.) Hence the number of perfect
pairs with one of the near-one-factor in the pair being $F_{k}$ is
$\phi(n)$. So, the total number of perfect pairs is
$\frac{(n*\phi(n))}{2}$. Note that this is the result of the
Proposition 2 of \cite{W92}).
\end{remark}

\section{One-Factors and Product Graphs}

This section provides an algebraic description of the construction
of a near-one-factor of a product graph from those of its
constituent graphs. It also analyzes the conditions under which
two near-one-factors of a product graph form a perfect pair and
supplements Theorem 3 of \cite{W92} with an algebraic proof.

\begin{definition}\textsl{\emph{[Product Graph]}} The product graph $M \times N$ of its constituent graphs $M$
and $N$ is defined as follows:
\begin{enumerate}
\item Vertex set, $V(M \times N)$, of the product graph $M \times
N$ is the cartesian product of the vertex sets $V(M)$ and $V(N)$
of its constituent graphs $M$ and $N$ respectively. That is $V(M
\times N)$ $=$ $V(M) \times V(N)$. \item The edge set of the
product graph, $E(M \times N)$, is $\{(v, w), (v', w')\} \in E(M
\times N)$ if and only if $\{v, v'\} \in E(M)$ and $\{w, w'\} \in
E(N)$.
\end{enumerate}
\end{definition}
\begin{claim} Let $s$ and $t$ be odd positive integers.
Also let $0 \leq i,i',k < s$ and $0 \leq j,j',\ell < t$.
Further, let $G_{k}$ and $H_{\ell}$ denote near-one-factors of the
complete graphs $K_{s}$ and $K_{t}$ respectively. Define $D_{k,
\ell}$ to be a square 0, 1 matrix whose rows and columns are
indexed by ordered pairs $(i, j)$ such that the element in $(i,
j)^{th}$ row and $(i', j')^{th}$ column is 1 if and only if $(i,
j) \neq (i', j')$, $(i + i') \bmod s = k$, and $(j + j') \bmod t =
\ell$. Then $D_{k, \ell}$ is an adjacency matrix of the product
graph $G_{k} \times H_{\ell}$.
\end{claim}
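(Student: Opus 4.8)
The plan is to reduce the statement to a single biconditional and then verify it by unwinding the definitions. First I would fix the ordering of $V(G_{k}\times H_{\ell})=V(G_{k})\times V(H_{\ell})$ by the ordered pairs $(i,j)$ with $0\le i<s$ and $0\le j<t$; with respect to this ordering, an adjacency matrix of $G_{k}\times H_{\ell}$ is the $st\times st$ array whose $\bigl((i,j),(i',j')\bigr)$ entry is $1$ precisely when $\{(i,j),(i',j')\}\in E(G_{k}\times H_{\ell})$. Straight from its definition, $D_{k,\ell}$ is a square $st\times st$ matrix with entries in $\{0,1\}$, it has zero diagonal (the requirement $(i,j)\ne(i',j')$), and it is symmetric (swapping $(i,j)$ with $(i',j')$ preserves all three defining conditions, since addition modulo $s$ and modulo $t$ is commutative). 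So the remaining content is that the support of $D_{k,\ell}$ coincides with $E(G_{k}\times H_{\ell})$.

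For that I would import the congruence description of $G_{k}$ and $H_{\ell}$ from Section~2: $\{a,b\}\in E(G_{k})$ iff $a\ne b$ and $a+b\equiv k\bmod s$, and $\{c,d\}\in E(H_{\ell})$ iff $c\ne d$ and $c+d\equiv \ell\bmod t$; moreover, since $s$ and $t$ are odd, Claim~2.1 gives that $G_{k}$ has the unique isolated vertex $\frac{k}{2}\bmod s$ and $H_{\ell}$ the unique isolated vertex $\frac{\ell}{2}\bmod t$. Substituting these into Definition~3.1, $\{(i,j),(i',j')\}\in E(G_{k}\times H_{\ell})$ exactly when $i+i'\equiv k\bmod s$, $j+j'\equiv \ell\bmod t$, $i\ne i'$ and $j\ne j'$. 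The two congruences are verbatim the congruences in the definition of $D_{k,\ell}$, and the implication ``edge of $G_{k}\times H_{\ell}$ gives a $1$ in $D_{k,\ell}$'' is immediate because $i\ne i'$ forces $(i,j)\ne(i',j')$; so everything reduces to the converse, namely that under the two congruences the lone requirement $(i,j)\ne(i',j')$ already entails $i\ne i'$ and $j\ne j'$.

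This converse, which concerns only the rows and columns indexed by the isolated vertices $\frac{k}{2}\bmod s$ and $\frac{\ell}{2}\bmod t$, is the step I expect to be the main obstacle, and it is exactly where oddness of $s$ and $t$ is needed: if $i=i'$ then $2i\equiv k\bmod s$, and since $2$ is invertible modulo the odd number $s$ this pins down $i=i'=\frac{k}{2}\bmod s$, and symmetrically $j=j'$ forces $j=j'=\frac{\ell}{2}\bmod t$. One then has to be careful that the product $G_{k}\times H_{\ell}$ of these two near-one-factors is understood so as to remain a near-one-factor of $K_{st}$ --- that is, so that within the isolated row $i=\frac{k}{2}\bmod s$ the vertex $(i,j)$ is still joined to $\bigl(i,(\ell-j)\bmod t\bigr)$, and within the isolated column $j=\frac{\ell}{2}\bmod t$ the vertex $(i,j)$ is still joined to $\bigl((k-i)\bmod s,\,j\bigr)$. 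With the product read this way, $E(G_{k}\times H_{\ell})$ is exactly the set of pairs $\{(i,j),(i',j')\}$ with $(i,j)\ne(i',j')$, $i+i'\equiv k\bmod s$ and $j+j'\equiv \ell\bmod t$, which is the support of $D_{k,\ell}$; together with the structural checks of the first paragraph, this proves the claim.
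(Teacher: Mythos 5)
Your proof follows the same definitional unwinding as the paper's, but you push it one step further and in doing so expose a point the paper's own proof silently elides. The paper's argument ends by asserting that $\{(i,j),(i',j')\}$ is an edge of the product if and only if $i \neq i'$, $j \neq j'$, $(i+i') \bmod s = k$ and $(j+j') \bmod t = \ell$, and then declares the claim proved --- but $D_{k,\ell}$ only demands $(i,j) \neq (i',j')$, which is strictly weaker. You correctly locate the discrepancy: the entries with $i = i' = \frac{k}{2} \bmod s$ (and $j \neq j'$, $j+j' \equiv \ell \bmod t$), and symmetrically those with $j = j' = \frac{\ell}{2} \bmod t$, are $1$ in $D_{k,\ell}$ yet do not correspond to edges of the categorical product of Definition 3.1, since $G_{k}$ has no loop at its isolated vertex. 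These extra $\frac{s-1}{2} + \frac{t-1}{2}$ edges are exactly what Claim 3.3 needs (a single isolated vertex $((\frac{k}{2}) \bmod s, (\frac{\ell}{2}) \bmod t)$ rather than $s+t-1$ of them), so your reading --- that the ``product'' of two near-one-factors must be understood with the isolated row and column patched in --- is the one the paper implicitly intends; but it does mean the claim is not literally true under Definition 3.1 as written, and your argument proves the corrected statement rather than the stated one. The rest of your proof (symmetry and zero diagonal of $D_{k,\ell}$, the congruence description of $E(G_{k})$ and $E(H_{\ell})$, and the use of invertibility of $2$ modulo the odd integers $s$ and $t$ to pin the exceptional indices to the isolated vertices) is sound and matches the paper's intent.
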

\begin{proof} We have the vertex set, $V(G_{k} \times H_{\ell})$, of the product
graph $G_{k} \times H_{\ell}$ as $\{(i, j) : i \in V(G_{k}) \mbox{
and } j \in V(H_{\ell}) \}$. Also, $\{(i, j), (i', j')\}$ is an
edge in the product graph if and only if both $\{i, i'\}$ and
$\{j, j'\}$ are edges in the graphs $G_{k}$ and $H_{\ell}$
respectively. But $\{i, i'\}$ and $\{j, j'\}$ are edges in their
respective graphs if and only if $i \neq i'$, $j \neq j'$, $(i +
i') \bmod s = k$, and $(j + j') \bmod t = \ell$. Hence the claim.
\end{proof}

\begin{claim}Let $n = s \times t$. Then $D_{k, \ell}$, where $0 \leq k < s$ and $0 \leq \ell < t$,
defined in Claim 3.2 is an adjacency matrix of a near-one-factor
of $K_{n}$, whose isolated vertex is $((\frac{k}{2}) \bmod s,
(\frac{\ell}{2}) \bmod t)$. That is, the product graph $G_{k}
\times H_{\ell}$ is a near-one-factor of the complete graph
$K_{n}$.
\end{claim}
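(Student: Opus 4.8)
The plan is to argue directly from the defining property of the matrix $D_{k,\ell}$ in Claim 3.2; since that claim identifies $D_{k,\ell}$ with an adjacency matrix of $G_{k}\times H_{\ell}$, it suffices to show that the graph it describes, on the vertex set $\{0,1,\dots,s-1\}\times\{0,1,\dots,t-1\}$ (which has $st=n$ elements and may therefore be identified with the vertex set of $K_{n}$), consists of edges covering every vertex exactly once except one distinguished vertex, i.e.\ that it is a near-one-factor of $K_{n}$. First I would fix a vertex $(i,j)$ and observe that the congruences $(i+i')\bmod s=k$ and $(j+j')\bmod t=\ell$ force $i'=(k-i)\bmod s$ and $j'=(\ell-j)\bmod t$; hence each row of $D_{k,\ell}$ has at most one nonzero entry, the graph is a partial matching, and it remains only to locate the unmatched vertex.

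Next I would determine when this forced partner $\bigl((k-i)\bmod s,\,(\ell-j)\bmod t\bigr)$ equals $(i,j)$ itself: this happens precisely when $2i\equiv k\pmod s$ and $2j\equiv \ell\pmod t$. Because $s$ and $t$ are odd, $2$ is invertible modulo each, so these congruences have the unique solution $i=\tfrac{k}{2}\bmod s$, $j=\tfrac{\ell}{2}\bmod t$. Since the definition of $D_{k,\ell}$ requires $(i,j)\neq(i',j')$, the vertex $v_{0}=\bigl(\tfrac{k}{2}\bmod s,\tfrac{\ell}{2}\bmod t\bigr)$ has no incident edge, whereas every vertex $(i,j)\neq v_{0}$ has exactly one neighbour, namely $\bigl((k-i)\bmod s,(\ell-j)\bmod t\bigr)\neq(i,j)$. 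Finally I would check symmetry: if $(i',j')=\bigl((k-i)\bmod s,(\ell-j)\bmod t\bigr)$, then reducing again gives $(i,j)=\bigl((k-i')\bmod s,(\ell-j')\bmod t\bigr)$, so the partner relation is an involution on the $n-1$ vertices other than $v_{0}$, and the edges it induces form a genuine one-factor on those vertices. Thus $D_{k,\ell}$ is a near-one-factor of $K_{n}$ with isolated vertex $v_{0}$, and by Claim 3.2 this near-one-factor is the product graph $G_{k}\times H_{\ell}$.

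I expect no serious obstacle here; the one point that must be handled with care is the uniqueness of the isolated vertex, which is exactly where the oddness of $s$ and $t$ (equivalently, invertibility of $2$ modulo $s$ and modulo $t$) enters — it is what guarantees that precisely one vertex, rather than none or several, is left uncovered. As a sanity check one may note that when $\gcd(s,t)=1$ the Chinese Remainder Theorem identifies $\mathbb{Z}_{s}\times\mathbb{Z}_{t}$ with $\mathbb{Z}_{n}$ and carries $D_{k,\ell}$ to the near-one-factor $F_{c}$ of Section 2, where $c$ is the residue modulo $n$ with $c\equiv k\pmod s$ and $c\equiv\ell\pmod t$, so that Claim 2.1 yields the same conclusion with isolated vertex $\tfrac{c}{2}\bmod n$ corresponding to $\bigl(\tfrac{k}{2}\bmod s,\tfrac{\ell}{2}\bmod t\bigr)$. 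Since Claim 3.3 does not assume $s$ and $t$ coprime, however, the direct matching argument above is the one I would record.
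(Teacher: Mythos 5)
Your proof is correct and follows essentially the same route as the paper: the paper's proof simply says to repeat the argument of Claim 2.1 with $k$, $i$, $j$, $n$ replaced by the corresponding ordered pairs, and your write-up carries out exactly that componentwise argument (forced partner, uniqueness of the fixed point via invertibility of $2$ modulo the odd numbers $s$ and $t$, involution property) in explicit detail.
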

\begin{proof} Fix $k$ and $\ell$ and argue as in Claim 2.1 by treating
$k$ as the ordered pair $(k, \ell)$, $i$ as $(i, j)$, $j$ as $(i',
j')$, and $n$ as $(s, t)$.
\end{proof}

\begin{remark} Let $G_{k}$, $G_{k'}$ denote a pair of
near-one-factors of the complete graph $K_{s}$ and $H_{\ell}$,
$H_{\ell'}$ denote a pair of near-one-factors of the complete
graph $K_{t}$. From the previous claim we have $G_{p} \times
H_{q}$, $p \in \{k, k'\}$ and $q \in \{\ell, \ell'\}$, are
near-one-factors of the complete graph $K_{st}$.
\end{remark}
\begin{claim} Let $G_{k}$, $G_{k'}$, $H_{\ell}$, and $H_{\ell'}$
 be as defined in the above remark. Then a pair of
 near-one-factors from the set $\{G_{p} \times H_{q} : p \in \{k, k'\}
 \mbox{ and } q \in \{\ell, \ell'\}\}$ of the complete graph $K_{st}$
 is perfect if and only if both the pairs, namely $G_{k}$, $G_{k'}$ and
 $H_{\ell}$, $H_{\ell'}$, are perfect for the corresponding
 complete graphs.
\end{claim}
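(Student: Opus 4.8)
The plan is to reduce Claim~3.5 to the one-dimensional results of Section~2 by means of the Chinese Remainder Theorem, exploiting that $s$ and $t$ are relatively prime (the hypothesis under which Theorem~3 of \cite{W92} is stated). I read ``a pair of near-one-factors from the set $\{G_p\times H_q\}$'' as one of the two pairs $\{G_k\times H_\ell,\,G_{k'}\times H_{\ell'}\}$ and $\{G_k\times H_{\ell'},\,G_{k'}\times H_\ell\}$, in which both of $G_k,G_{k'}$ and both of $H_\ell,H_{\ell'}$ occur; these are the pairs for which the phrase ``both the pairs, namely $G_k,G_{k'}$ and $H_\ell,H_{\ell'}$'' is meaningful. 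The remaining four pairs from the set repeat a $G$-factor or an $H$-factor, and inspecting the two-step permutation induced on the union shows each such union is disconnected when $s,t\ge 3$, so it is never a perfect pair.

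The first step is to identify the product near-one-factors with those of Section~2. Let $\psi\colon\mathbb{Z}_s\times\mathbb{Z}_t\to\mathbb{Z}_{st}$ be the ring isomorphism given by the Chinese Remainder Theorem. By Claims~3.2 and~3.3, $D_{k,\ell}=G_k\times H_\ell$ joins $(i,j)$ to $(i',j')$ precisely when $(i,j)\neq(i',j')$, $(i+i')\bmod s=k$ and $(j+j')\bmod t=\ell$; since $\psi$ is an additive bijection, relabelling the vertex $(i,j)$ of $K_{st}$ as $\psi(i,j)$ turns this condition into ``$\psi(i,j)\neq\psi(i',j')$ and $\psi(i,j)+\psi(i',j')\equiv\psi(k,\ell)\pmod{st}$''. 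Hence, after this relabelling, $D_{k,\ell}$ is exactly the near-one-factor $F_{\psi(k,\ell)}$ of $K_{st}$ of Section~2 (consistently with Claim~3.3, its isolated vertex $((k/2)\bmod s,(\ell/2)\bmod t)$ is sent to $(\psi(k,\ell)/2)\bmod st$). As being a perfect pair depends only on the union graph up to relabelling of vertices, $\{D_{k,\ell},D_{k',\ell'}\}$ is a perfect pair of $K_{st}$ if and only if $\{F_{\psi(k,\ell)},F_{\psi(k',\ell')}\}$ is.

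The second step applies Lemma~2.5 (equivalently Remark~2.6) three times. By Lemma~2.5 on $K_{st}$, $\{F_{\psi(k,\ell)},F_{\psi(k',\ell')}\}$ is perfect iff $\psi(k,\ell)-\psi(k',\ell')=\psi(k-k',\ell-\ell')$ is relatively prime to $st$; and, since $\gcd(s,t)=1$, a residue is a unit mod $st$ exactly when it is a unit mod $s$ and mod $t$, so this is equivalent to ``$k-k'$ relatively prime to $s$ \emph{and} $\ell-\ell'$ relatively prime to $t$''. Lemma~2.5 applied to $K_s$ says the first condition holds iff $\{G_k,G_{k'}\}$ is a perfect pair of $K_s$, and Lemma~2.5 applied to $K_t$ says the second holds iff $\{H_\ell,H_{\ell'}\}$ is a perfect pair of $K_t$. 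Chaining these equivalences gives both directions of Claim~3.5. The anti-diagonal pair $\{G_k\times H_{\ell'},\,G_{k'}\times H_\ell\}$ is treated identically, with $\ell-\ell'$ replaced throughout by $\ell'-\ell$, which is a unit mod $t$ precisely when $\ell-\ell'$ is.

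The step that needs the most care is this identification, and with it the coprimality hypothesis, which is used genuinely rather than cosmetically: when $\gcd(s,t)>1$ the ``if'' direction of the claim fails. Indeed, tracing $D_{k,\ell}\cup D_{k',\ell'}$ by the method of Lemma~2.3 shows that the path from the isolated vertex of $D_{k,\ell}$ returns to the isolated vertex of $D_{k',\ell'}$ after exactly $\mathrm{lcm}(\,s/\gcd(k-k',s),\,t/\gcd(\ell-\ell',t)\,)$ vertices (the rest of the union being disjoint cycles); this number is $st$ iff $\gcd(s,t)=1$ and both of $\gcd(k-k',s)$ and $\gcd(\ell-\ell',t)$ equal $1$, whereas if $s$ and $t$ share a prime it is at most $\mathrm{lcm}(s,t)<st$, so the union is never Hamiltonian however good the constituent pairs are. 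This computation also gives a Chinese-Remainder-free alternative to the whole argument: carry it out in general and read off Claim~3.5 from the displayed condition.
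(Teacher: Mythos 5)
Your proof is correct, and it lands on the same intermediate characterization the paper uses --- the pair $G_k\times H_\ell$, $G_{k'}\times H_{\ell'}$ is perfect iff $\gcd(k-k',s)=1$ and $\gcd(\ell-\ell',t)=1$ --- but you reach it by a cleaner and more complete route. The paper's proof consists of the single sentence ``arguing as in Section 2, these two near-one-factors can be shown to form a perfect pair if and only if $(k-k')$ is relatively prime to $s$ and $(\ell-\ell')$ is relatively prime to $t$,'' i.e.\ it asks the reader to re-trace the alternating path componentwise in the product; you instead use the Chinese Remainder isomorphism to identify $D_{k,\ell}$ outright with the near-one-factor $F_{\psi(k,\ell)}$ of Section 2 and then invoke Lemma 2.5 once on $K_{st}$, which anticipates the paper's own Lemma 4.1. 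More importantly, you repair two genuine defects in the paper's version. First, the ``if'' direction really does require $\gcd(s,t)=1$: the simultaneous-repeat condition in the componentwise trace occurs after $\mathrm{lcm}\bigl(s/\gcd(k-k',s),\,t/\gcd(\ell-\ell',t)\bigr)$ steps, which is $st$ only when $s$ and $t$ are coprime, so the paper's stated equivalence is false for, say, $s=t=3$ even though its hypotheses (only that $s,t$ be odd, inherited from Claim 3.2 and Remark 3.4) permit that case; the coprimality assumption is silently imported from Wagner's Theorem 3 and deserves to be made explicit exactly as you do. Second, the claim as literally worded quantifies over all six pairs from the four-element set, and for the four pairs that repeat a $G$- or $H$-factor the ``if'' direction fails (the difference is divisible by $s$ or by $t$, so the union is disconnected); your restriction to the two diagonal pairs is the reading forced by Remark 3.6, which counts exactly two perfect pairs per pair of constituent perfect pairs. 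Your closing $\mathrm{lcm}$ computation is also a nice CRT-free alternative that proves the claim and its failure outside the coprime case in one stroke.
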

\begin{proof} Consider two near-one-factors $G_{k} \times H_{\ell}$
and $G_{k'} \times H_{\ell'}$ of the complete graph $K_{st}$.
Arguing as in Section 2, these two near-one-factors can be shown
to form a perfect pair if and only if $(k - k')$ is relatively
prime to $s$ and $(\ell - \ell')$ is relatively prime to $t$. But
from Lemma 2.5, it means that both the pairs namely $G_{k}$,
$G_{k'}$ and $H_{\ell}$, $H_{\ell'}$ are perfect. Hence the claim.
\end{proof}
\begin{remark}Since $\gcd((x - x), s) = s$, it follows from the proof of the above
claim that there are two perfect pairs, namely $G_{k} \times
H_{\ell}$, $G_{k'} \times H_{\ell'}$ and $G_{k'} \times H_{\ell}$,
$G_{k} \times H_{\ell'}$ of $K_{st}$ for every
 perfect pair $G_{k}$, $G_{k'}$ of $K_{s}$ and perfect pair
 $H_{\ell}$, $H_{\ell'}$ of $K_{t}$. So, the number of perfect
 pairs of $K_{st}$ is more than or equal to twice
 the number of perfect pairs of $K_{s}$ times the number of
 perfect pairs of $K_{t}$. That is, $c(K_{st}) \geq 2 * c(K_{s}) *
 c(K_{t})$. Note that this is the result of Theorem 3 of \cite{W92}.
\end{remark}

\section{Equivalence of Lower Bounds}

In the previous section, we have shown that the product of
near-one-factors of constituent graphs is a near-one-factor of the
product graph. We now show that, under certain mild conditions, a
near-one-factor of a product graph is the product of
near-one-factors of its constituent graphs. This is achieved by
establishing a one-to-one correspondence between the set of
near-one-factors of the product graph and the set of products of
near-one-factors of its constituent graphs. So, it implies that
the lower bounds obtained in Sections 2 and 3 are equivalent.

\begin{lemma} Let $s$ and $t$ be odd positive integers and are
co-prime to each other. Also, let $n = s \times t$. Further, let
$A_{p}$, $0 \leq p < n$, and $D_{k, \ell}$, $0 \leq k < s$, $0
\leq \ell < t$, be 0, 1 matrices defined in Section 2 and Claim
3.2 respectively. Then there is a one-to-one correspondence
between the sets $\{A_{p} : 0 \leq p < n\}$ and $\{D_{k, \ell} : 0
\leq k < s, 0 \leq \ell < t\}$.
\end{lemma}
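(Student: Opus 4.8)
The plan is to use the Chinese Remainder Theorem. Since $\gcd(s,t)=1$, the map $\psi\colon \{0,1,\dots,n-1\}\to\{0,\dots,s-1\}\times\{0,\dots,t-1\}$ sending $m$ to $(m\bmod s,\ m\bmod t)$ is a bijection (indeed a ring isomorphism $\mathbb{Z}_n\cong\mathbb{Z}_s\times\mathbb{Z}_t$). I would use $\psi$ for two purposes at once: to relabel the vertex set of $K_n$ by ordered pairs, and to match the index $p$ of $A_p$ with an index pair $(k,\ell)$ of $D_{k,\ell}$. First I would record that both families have exactly $n=st$ members. For $\{A_p\}$, distinct $p$ give distinct matrices, because by Claim 2.1 the graph $F_p$ has isolated vertex $\frac{p}{2}\bmod n$ and the map $p\mapsto\frac{p}{2}\bmod n$ is injective (as $2$ is invertible mod $n$), so $p$ is recoverable from $A_p$. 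Likewise, by Claim 3.3 the matrix $D_{k,\ell}$ is the adjacency matrix of $G_k\times H_\ell$, whose isolated vertex is $(\frac{k}{2}\bmod s,\ \frac{\ell}{2}\bmod t)$, so the pair $(k,\ell)$ is recoverable from $D_{k,\ell}$; hence $\{D_{k,\ell}\}$ has $st=n$ members as well.

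Next I would define the correspondence $\Phi$ by $\Phi(A_p)=D_{\,p\bmod s,\ p\bmod t}$ and check that it is a bijection. Surjectivity is immediate: given $(k,\ell)$ there is, by the Chinese Remainder Theorem, a unique $p\in\{0,\dots,n-1\}$ with $p\equiv k\pmod s$ and $p\equiv\ell\pmod t$, and then $\Phi(A_p)=D_{k,\ell}$. Injectivity follows from the recoverability noted above: $\Phi(A_p)=\Phi(A_{p'})$ forces $p\equiv p'\pmod s$ and $p\equiv p'\pmod t$, whence $p=p'$ since $\gcd(s,t)=1$, and therefore $A_p=A_{p'}$. As both sets are finite of the same size, this already gives the desired one-to-one correspondence.

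It is worth upgrading this counting bijection to an honest structural identification, which is where the hypothesis ``$s,t$ coprime'' (the mild condition) is really used. I would verify that under the relabeling $\psi$ of vertices, an edge $\{v,v'\}$ lies in $F_p$ iff $v\ne v'$ and $v+v'\equiv p\pmod n$, and that by the Chinese Remainder Theorem this is equivalent to $(v\bmod s)+(v'\bmod s)\equiv p\bmod s\pmod s$ together with $(v\bmod t)+(v'\bmod t)\equiv p\bmod t\pmod t$ and $(v\bmod s,v\bmod t)\ne(v'\bmod s,v'\bmod t)$; comparing with the definition of $D_{k,\ell}$ in Claim 3.2, this says exactly that $\psi$ carries $F_p$ onto $G_k\times H_\ell$ with $(k,\ell)=(p\bmod s,\ p\bmod t)$. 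Equivalently, the permutation matrix $P$ induced by $\psi$ satisfies $P\,A_p\,P^{T}=D_{\,p\bmod s,\ p\bmod t}$. The only point requiring any care is this splitting of the single congruence $v+v'\equiv p\pmod n$ into the pair of congruences mod $s$ and mod $t$; everything else is routine. Finally I would note that $\Phi$ also respects perfectness, since $\gcd(p-p',n)=1$ iff $\gcd(p-p',s)=1$ and $\gcd(p-p',t)=1$, which by Lemma 2.5 is precisely the condition for $A_p,A_{p'}$ to be a perfect pair on the one side and for both $G_k,G_{k'}$ and $H_\ell,H_{\ell'}$ to be perfect on the other; hence the lower bounds of Sections 2 and 3 count the same quantity.
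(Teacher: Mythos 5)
Your proposal is correct and follows essentially the same route as the paper: the paper's proof simply defines the map $A_p \mapsto D_{k,\ell}$ by $p \bmod s = k$, $p \bmod t = \ell$ and invokes the Chinese Remainder Theorem to conclude it is a bijection. The additional material you supply --- the injectivity of $p \mapsto A_p$ via the isolated vertex, and the verification that the vertex relabeling actually conjugates $A_p$ into $D_{p\bmod s,\,p\bmod t}$ --- is a welcome tightening of details the paper leaves implicit (the latter is really what Remark 4.2 needs), but it does not change the underlying argument.
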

\begin{proof}Define a mapping from the set of matrices $\{A_{p} : 0 \leq p <
n\}$ to the set of matrices $\{D_{k, \ell} : 0 \leq k < s, 0 \leq
\ell < t\}$ such that the matrix $A_{p}$ gets mapped to $D_{k,
\ell}$ if and only if $p \bmod s = k$ and $p \bmod t = \ell$.
Since $s$ and $t$ are co-prime to each other, by Chinese Remainder
Theorem, this mapping is a one-to-one correspondence.
\end{proof}
\begin{remark}Since $\{A_{p} : 0 \leq p < n\}$ denote the adjacency
matrices of a set of near-one-factors of $K_{n}$ and $\{D_{k,
\ell} : 0 \leq k < s, 0 \leq \ell < t\}$ also denote the adjacency
matrices of another set of near-one-factors of $K_{n}$, it follows
from the above lemma that these two sets of near-one-factors are
one and the same. So, the number of perfect pairs in both the sets
of near-one-factors is same.
\end{remark}
\section{Conclusions}This note establishes that the apparently two different lower bounds
derived in \cite{W92} are one and the same. It also extends the
definition of a near-one-factor given in Proposition 2 of
\cite{W92} to one-factors and comes up with an alternative
treatment for the proposition. In addition, it renders an
algebraic description to the construction of a near-one-factor of
a product graph from those of its constituent graphs and supplies
Theorem 3 of \cite{W92} with an algebraic proof.

\bibliographystyle{plain}

\end{document}